\theoremstyle{definition}
\newtheorem{lemma}{Lemma}
\newtheorem*{Pro*}{Problem}
\theoremstyle{remark}
\newtheorem{remark}{Remark}
\def\BibTeX{{\rm B\kern-.05em{\sc i\kern-.025em b}\kern-.08em
    T\kern-.1667em\lower.7ex\hbox{E}\kern-.125emX}}
\begin{document}

\title{Closed-Form and Boundary Expressions for Task-Success Probability in Status-Driven Systems\\
}
\author{Jianpeng~Qi,
        Chao Liu,
        Rui Wang,
        Junyu~Dong,
        and~Yanwei~Yu
\thanks{J. Qi, C. Liu, J. Dong, Y. Yu are with the Faculty of Information Science and Engineering, Ocean University of China. email: \{qijianpeng, liuchao, dongjunyu, yuyanwei\}@ouc.edu.cn}
\thanks{R. Wang is with the School of Computer and Communication Engineering, University of Science and Technology Beijing. email: wangrui@ustb.edu.cn}
}

\maketitle

\begin{abstract}
Timely and efficient dissemination of server status is critical in compute-first networking (CFN) systems, where user tasks arrive dynamically and computing resources are limited and stochastic. In such systems, the access point (AP) plays a key role in forwarding tasks to a server based on its latest received server status. However, modeling the task-success probability suffering the factors of stochastic arrivals, limited server capacity, and bidirectional link delays. Therefore, we introduce a unified analytical framework that abstracts the AP’s forwarding rule as a single probability and models all network and waiting delays via their Laplace transforms. This approach yields a closed‐form expression for the end‐to‐end task‐success probability, together with upper and lower bounds that capture Erlang‐loss blocking, information staleness, and random uplink/downlink delays. We validate our results through one‐factor‐at‐a‐time simulations across a wide range of parameters, showing that theoretical predictions and bounds consistently enclose observed success rates. Our framework requires only two interchangeable inputs (the forwarding probability and the delay transforms), making it readily adaptable to alternative forwarding policies and delay distributions. Experiments demonstrate that our bounds are able to achieve accuracy within 1.0\% (upper bound) and 1.6\% (lower bound) of the empirical task‐success probability.
\end{abstract}

\begin{IEEEkeywords}
compute first networking, status update, success probability, closed-form expression.
\end{IEEEkeywords}
\section{Introduction}
\begin{figure*}[htbp]
    \centering
    \includegraphics[width=\linewidth]{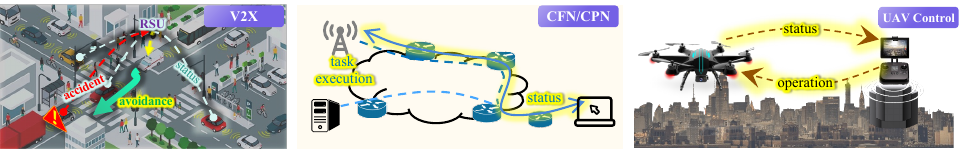}
    \caption{Scenarios of compute-aware task offloading}
    \label{fig:scenario}
    \vspace{-18pt}
\end{figure*}

In vehicular networks, vehicles act as source nodes by relaying sudden incident reports to roadside units (RSUs), which serve as destination nodes. The RSUs analyze and forecast the situation, then sending commands to nearby vehicles to facilitate safe navigation \cite{RN76}. Similarly, in Compute First Networking (CFN)\cite{RN2652, RN3, RN53}, real-time updates on computational resources are communicated to the network edge, enabling clients to accurately assess remote available resources and offload tasks accordingly. In unmanned aerial vehicle (UAV) control systems, drones transmit environmental data collected on-site to a control station, which processes the information and issues operational directives to accomplish reconnaissance tasks. As depicted in \figurename~\ref{fig:scenario}, these scenarios illustrate a structured closed-loop system consisting of four essential stages: ``Status updating'', ``status consumption'', ``task forwarding'', and ``task execution''. The cyclic process of transmitting resource or environmental data and executing subsequent tasks is widespread across various networking scenarios. Despite its growing importance, guaranteeing task completion in those status-driven systems remains challenging due to inherent uncertainties arising from stochastic task arrivals, limited resources, and random transmission delays.

From a more general perspective, terminal devices frequently offload locally generated computational tasks to remote service nodes (edge servers) for execution. To assess the availability of a service node, servers periodically or conditionally transmit their current resource status to access points (APs) when significant changes occur or when coordination is necessary. Subsequently, the AP decides whether to offload the client's task based on this information \cite{10816182, RN53}. Therefore, the timeliness and accuracy of service information (i.e, resource status) in these systems directly impact the likelihood of successful task execution. Incorrect or outdated status can cause the AP to misjudge the availability or capability of service nodes, leading to failed task redirection or inefficient offloading decisions. 

Actually, the research community has introduced related theoretical frameworks such as Age of Information (AoI) \cite{RN1372} and Value of Information (VoI) to assess and optimize information update mechanisms \cite{RN2760, 9757236}. AoI is a performance metric designed to quantify the timeliness of information. It focuses on the time elapsed between the generation of an update and its reception at the destination. 
Extended variants such as AoP (Age of Processing) \cite{RN1181} and QAoI (Query Age of Information) \cite{RN1437} have been proposed. AoP captures delays in processing, while QAoI specifically reflects the AoI at the time it is queried or consumed by the decision-making node, offering a more comprehensive view of latency and relevance in complex systems. VoI evaluates the importance of information from a semantic and functional perspective. In essence, it quantifies the degree to which a piece of information enhances decision making or the utility of the system. For example, in network control systems, the ``value'' of the information can be interpreted as the amount by which it reduces environmental uncertainty \cite{RN2761}. 

However, there has been little research establishing the relationship between \textit{task-success probability} and the various dynamic factors in status-driven closed-loop systems. Exploring this relationship will provide valuable insights into how system performance responds to changing conditions.
Therefore, in this paper, we investigate the problem of modeling and analyzing task-success probability in status-driven systems. To lower the level of difficulty, we mainly focus on a \textit{single-server, single-destination} scenario, aiming to build the closed-form expression of the achievable task-success probability under dynamic task arrivals, stochastic AP forwarding decisions, multi-thread server capabilities, and stochastic uplink and downlink delays. 

We first express the AP’s forwarding rule as a single forwarding probability, the chance that a newly arrived task will actually be sent to the server. We then apply a fixed-point iteration to determine the effective forwarding rate that corresponds to this probability. Next, we derive a closed-form expression for the task-success probability, and the upper bound based on the Erlang-B formula.
However, deriving the lower bound is not straightforward. To address this, we examine the critical state where exactly one thread remains idle and analyze how frequently the system transitions from ``one free'' to ``none free,'' which determines the risk of resource exhaustion.
We then incorporate the three independent delays (the uplink delay, the waiting age, and the downlink delay) using their Laplace transforms, multiply the relevant factors, and obtain the concise lower bounds. 

Our experiments show that the closed-form theoretical probability almost perfectly fit the observed task-success results. For instance, when varying the task-arrival rate, the upper bound is within 1\% and the lower bound within 1.6\% of the theoretical value, effectively enclosing the empirical results. Surprising, we also find that increasing the status update frequency brings only modest improvements in the task-success probability.
Because our methodology requires only these two inputs (the forwarding probability and the Laplace transform of the delay) it is straightforward to adapt to different admission rules, link conditions, or arrival patterns simply by updating these values, while leaving the rest of the framework unchanged.

This paper makes the following key contributions:
\begin{itemize}
    \item We propose an unified analytical framework that jointly considers resource blocking, AoI-based status staleness, and stochastic uplink and downlink delays. Based on that, we derive the closed-form of task-success probability. 
    \item To facilitate both analysis and practical implementation, by leveraging Laplace transforms and renewal process theory, we derive upper and lower bounds.
    \item We evaluate how well the closed-form expression and analytical bounds capture the system’s behavior. Theoretical results closely match the experimental outcomes, and our experiments further validate the proposed theories on resource consumption and user arrivals.
\end{itemize}

\section{Related Work} \label{sect:related-work}

Ensuring a high task-success probability is a central challenge in status-driven CFN systems. The accuracy and timeliness of resource status updates directly influence the likelihood that tasks will successfully execute. The AoI metric, which quantifies the freshness of status updates, is commonly employed to evaluate this relationship \cite{RN1372}. Over the past few years, numerous works have extended the AoI concept to better capture system nuances \cite{RN39, RN48, RN83}. 
Variants such as AoP \cite{RN1181}, Age of Event (AoE) \cite{10.1145/3717834}, TPAoI (Three-Phase AoI) \cite{RN2759} and remote inference \cite{RN51}. They aim to better represent real-world system dynamics and their impact on task results. For example, the TPAoI explicitly captures update generation, status consumption, and task dispatch stages, linking information freshness directly to task execution outcomes.

Nevertheless, the stochastic system states complicates the accurate prediction of resource availability solely through AoI-based metrics \cite{RN51, RN34, RN14}. User interactions further amplify this complexity by introducing dynamics that traditional AoI metrics cannot fully capture, potentially leading to incorrect task-forwarding decisions and reduced success rates. For instance, QAoI \cite{RN1437} extends AoI to measure age precisely at user-query instants and has been evaluated under both deterministic and uniformly distributed query arrivals. To overcome these limitations, researchers have proposed advanced metrics such as VoI, which emphasizes the utility of updates, addressing both freshness and semantic relevance to improve task-success probability \cite{RN2760, RN2761}.

Recent studies have increasingly recognized the necessity of integrating client behavior and task execution dynamics into metrics evaluating information effectiveness \cite{RN40, RN1181, RN26, RN37, RN38, RN1437}. 
Despite these advancements, explicitly quantifying and modeling the relationship between information freshness, resource status accuracy, and task-success probability remains an open research problem in CFN.

To introduce determinism into system performance, several analytical studies have established closed‐form and bounded expressions for AoI and its variants under diverse delay models and system settings. Based on a Rayleigh block fading model and assuming each transmission consumes one time slot, \cite{RN1198} derives closed-form equations for the probability mass function (PMF) of AoI in an $N$-hop network with time-invariant packet loss probabilities on each link. \cite{RN48} derives the average AoI in a two‐way delay system assuming geometrically distributed link latencies. \cite{RN37} analyzes optimal update‐rate bounds when the uplink delay distribution is unknown. \cite{RN34} examines the Age of Incorrect Information, identifying threshold‐based update policies for geometric or Zipf‐distributed transmission delays. In multi‐source scenarios with correlated updates, \cite{10.1145/3492866.3549719} characterizes weighted‐sum average AoI bounds under channel‐conflict conditions. 
\cite{9757236} defines the VoI for hidden Markov models as the mutual information between current and past dynamic observations. Specializing to the Ornstein–Uhlenbeck process, they derive a closed-form expression for VoI and evaluate it under various sampling policies (arbitrary fixed, uniform, and Poisson) as well as for exponentially distributed transmission delays. 
More broadly, \cite{RN1372} provide boundary analyses for AoI in single‐ and multi‐server queuing systems, including cases with exponentially distributed delays. To retain analytical tractability in our framework, we likewise assume exponential uplink and downlink latencies.  In addition, numerous works employ the Lyapunov optimization \cite{RN1618}  or reinforcement learning (both model-based methods such as \cite{RN1437} and model-free approaches like \cite{RN25}) to optimize update policies and indirectly enhance system performance; since our focus is on the direct derivation and analysis of task-success probability, we omit further discussion of these methods.

In summary, recent research converges on the importance of status updating and status-driven task offloading: Systems like CFN require up-to-date and relevant status information to make correct offloading decisions, and theoretical frameworks like AoI/VoI have been extended to guide when and what to update. Building on these insights, our study seeks to explicitly relate task success probability to factors such as resource uncertainty, update staleness, and network delays, which have been less examined in previous research.

\section{System Model and Problem Formulation} \label{sect:system-model}
In this section, we first describe the system model, then formalize its underlying dynamics, and finally state the problem addressed in this paper.

\subsection{System Model}
\figurename~\ref{fig:system-model} depicts a single-source, single-destination CFN scenario: A single AP communicates with a remote server that can host at most $C$ concurrent threads (or virtual containers), each thread with processing speed $\mu$. The server sends status packets periodically with rate $\bar r$ to the AP over an \emph{uplink} with transmission rate~$\gamma$ and receives user-task packets from the AP over a \emph{downlink} with rate~$\beta$. The original user task arrival process follows Poisson process with parameter $\lambda_{in}$ \cite{RN53}. The server’s instantaneous availability is the scalar $c(t)$, the number of idle threads at time~$t$. Whenever a new status packet arrives, the AP records this value as $\hat c(t)$; the cache remains unchanged until the next update, even though $c(t)$ may fluctuate meanwhile. 

\begin{figure}[htbp]
    \centering
    \includegraphics[width=\linewidth]{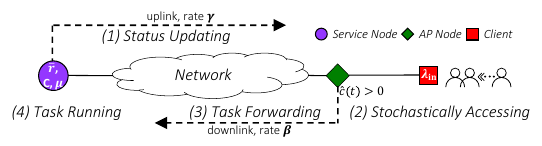}
    \caption{Illustration of the information-consumption stages}
    \label{fig:system-model}
\end{figure}

When serving, the overall process of this system consists of the following four stages: 
\begin{enumerate}
\item \textit{Status Updating}: The server transmits its current availability $c(t)$ to the AP over the uplink.
\item \textit{Stochastically Accessing}: User tasks reach the AP according to a Poisson process of rate $\lambda_{\mathrm{in}}$.
\item \textit{Task forwarding}: Upon each task arrival, the AP checks $\hat c(t)$. If $\hat c(t)>0$, the task is forwarded immediately; otherwise it is dropped at the AP.
\item \textit{Task Running}: A forwarded task occupies one thread and is processed with an service time of $1/\mu$. If all $C$ threads are busy on arrival, the task is discarded (no buffer). After completion, the server returns the result to the client; the return latency is beyond the scope of this study.
\end{enumerate}

\begin{remark}
The instantaneous server state is captured by a single scalar $c(t)$, interpreted as the currently available compute capacity. A larger value of $c(t)$ therefore denotes stronger capability. When the server exposes heterogeneous resources (CPU, memory, GPU slots, network bandwidth, etc.), a direct scalar may be inadequate; in that case one can learn an embedding function $c(t)=\phi(\mathbf{r}(t))$ that maps the multi-dimensional resource and demand vector $\mathbf{r}(t)$ to a one-dimensional score. The mapping $\phi(\cdot)$ can be instantiated by a lightweight deep neural network trained offline (or periodically fine-tuned) to minimize task-completion violations under representative workloads. 
\end{remark}

Some important symbols and explanations can be found in \tablename~\ref{tbl:symbols}. 
\begin{table*}[htbp]
    \caption{List of important symbols}
    \label{tbl:symbols}
    \centering
    \renewcommand{\arraystretch}{1.2}
    \begin{tabular}{c|p{6.5cm}|c|p{6.5cm}}
    \toprule
    \textbf{Symbol} & \textbf{Meaning / Default Assumption} 
    & \textbf{Symbol} & \textbf{Meaning / Default Assumption} \\ 
    \hline 
    $\lambda_{\mathrm{in}}$ 
      & Exogenous user-task arrival rate at the AP. 
      & $\lambda$ 
      & AP forwarding rate after thinning; $\lambda\le\lambda_{\mathrm{in}}$.\\ 
    $C$ 
      & Maximum number of concurrent server threads. 
      & $c(t)$ 
      & Instantaneous number of idle threads; $0 \le c(t) \le C$. \\[2pt]
    $\hat c(t)$ 
      & Most recent cached value of $c(t)$ stored at the AP. 
      & $\mathbb E[Y]$ 
      & Mean inter-update interval; $\mathbb E[Y] = 1/\bar r$, where $\bar r$ is the average update frequency (controllable). \\[2pt]
    $Y_k$ 
      & $k$th inter-update gap at AP.
      & $S_k$ 
      & Arrival time of the $k$th update, 
      ($S_k=\sum_{i<k}Y_i$).
      \\[2pt]
    $N(t)$ 
      & Number of updates received by time $t$: $N(t)=\max\{k:S_k\le t\}$. 
      & $A(t)$ 
      & Elapsed time since the $N(t)$-th update at the AP: $A(t) = t - S_{N(t)}$. \\[2pt]
    $\Delta$ 
      & $D+A$: AoI from status generation to AP consumption. 
      & $H$ 
      & $\Delta+F$: total decision window from state generation to task arrival at the server. \\[2pt]
    $\Pr\{\hat c(t)>0\}$ 
      & AP's forwarding probability. 
      & $P_{\mathrm{idle}}$ 
      & Probability that the server has at least one idle thread.\\
    $\gamma$ 
      & Uplink transmission rate for status updates. 
      & $D$ 
      & Uplink delay of a status packet, $\mathrm{Exp}(\gamma)$. \\[2pt]
    $\beta$ 
      & Downlink transmission rate for task packets. 
      & $F$ 
      & Downlink delay of a task packet, $\mathrm{Exp}(\beta)$. \\[2pt]
    $\mu$ 
      & Per-thread service rate at the server. 
      & $\rho$ 
      & Offered load (server utilization) $\lambda/\mu$. \\[2pt]
    $\Lambda$ 
      & Resource-exhaustion hazard rate: $\Lambda = \lambda\bigl[1 - B(\rho,C)\bigr]$. 
      & $S(x)$ 
      & Probability that the server remains non-empty in $[0,x]$.\\ 
      \bottomrule
    \end{tabular}
    \vspace{-18pt}
\end{table*}
\subsection{System Dynamics and Problem Formulation}
The system model shows that several stochastic processes interact within the CFN scenario. To structure our analysis, we now formalize these processes from two viewpoints: (i) The \emph{task side}, which tracks how user requests arrive, are forwarded by the AP, and compete for server threads, and (ii) the \emph{status side}, which describes how the server’s status information are generated, transmitted, cached, and consumed.

\textbf{\textit{Task-related processes.}}
The \emph{Server} operates as an pure-loss (Erlang-loss) system: $C$ execution threads
work in parallel and each thread completes its task after an 
service time with $1/\mu$.
Recall that user tasks reach the AP according to an exogenous Poisson process with rate $\lambda_{in}$. Upon each arrival the AP consults $\hat c(t)$:
\begin{itemize}
    \item If $\hat{c}(t) > 0$, the task is forwarded immediately;
    \item If $\hat{c}(t) = 0$, the task is discarded on the spot.
\end{itemize}

The rate of forwarding the task to the server is denoted as 
\begin{equation}
    \lambda =\lambda_{\mathrm{in}}
           \mathbbm 1_{\{\hat c(t)>0\}},
\end{equation}
where $\mathbbm{1}_{\{\hat c(t)>0\}}$ is the indicator that the cached state shows at least one idle thread.
While idle threads are available, tasks arrive according to a Poisson process with rate $\lambda$; otherwise they are blocked.

The steady-state probability that all $C$ execution threads are simultaneously \emph{busy} (i.e., the server is fully occupied) is given by the classical Erlang-B formula
\begin{equation}
    B(\rho,C)=\frac{
        \rho^C/C!
    }{
        \sum_{k=0}^{C}\rho^k/k!
    },
\end{equation}
where the ratio $\rho=\lambda/\mu$ is the offered load (utilization).
Consequently, the steady-state probability that the server retains at least one idle thread (hence remains capable of accepting a newly forwarded task) is
\begin{equation} \label{eq:p-idle}
    P_{idle}=1-B(\rho,C).
\end{equation}

\textbf{\textit{Status updating processes.}} 
Status packets arrive at the AP at time $S_0=0, ~ S_k = \sum_{i=0}^{k-1}Y_i, ~ k\geq 1$, where the inter-update gaps (at AP) $\{Y_k\}_{k\geq 0}$ are independent and identically distributed.

Let $N(t)=max\{ k: S_k \leq t \}$ be the index of the most recent packet by time $t$. The \emph{age process} 
\begin{equation}
    A(t)=t - S_{N(t)}
\end{equation}
measures how long the AP has been relying on a possibly stale snapshot.

Following the approach adopted in previous studies \cite{9757236,RN1372}, we model the delays in both uplink (denoted as $D$) and downlink (or forwarding link, denoted as $f$) as independent exponential random variables with parameters $\gamma$ and $\beta$, respectively. It is important to emphasize that the choice of this particular distribution does not limit the general applicability of our analytical framework. 

Coupled with the uplink delay $D$ and the downlink delay $F$, the total time lag between the generation of a status packet and the moment the corresponding task touches the server is 
\begin{equation}\label{eq:effective_window}
    H=D+A(t)+F.
\end{equation}
We assume all three components are mutually independent. This lag, hereafter called the \emph{holding window}, determines whether the server still owns resources when the task actually arrives.  
Note that, the definition of $A(t)$ is slightly different from the classical AoI $\Delta$ \cite{RN1372}, which also contains the time of transmitting (i.e., $\Delta = D+A(t)$).

\begin{Pro*}[Task-Success Probability $P_\text{succ}\left(\bar{r},C,\mu,\lambda_{in},\gamma,\beta\right)$]
Given the status update rate $\bar{r}$, uplink transmission rate $\gamma$, downlink transmission rate $\beta$, task arrival rate $\lambda_{\mathrm{in}}$, per-thread service rate $\mu$, and the maximum number of concurrent server threads $C$, let $N_{\text{succ}}$ denote the number of tasks that successfully complete task on the server and $N_{\text{arr}}$ the total number of tasks arriving at the AP. The empirical task-success probability is defined as $\hat P_{\text{succ}} = \frac{N_{\text{succ}}}{N_{\text{arr}}}$.
Our goal is to derive a theoretical closed-form expression for the achievable task-success probability $P_\text{succ}(\bar{r},\gamma,\beta,\lambda_{\mathrm{in}},\mu,C)$ that closely matches real system behavior, as well as analytical upper and lower boundaries for $P_\text{succ}$ under the given system parameters.
\end{Pro*}

As described earlier in the system model, the solution to this problem jointly captures the effects of dynamic server capacity, user access behavior, information staleness, and bidirectional random link delays. In doing so, it provides a unified and precise foundation for the stochastic analysis and optimization that follow.

\section{Dynamic Analyses}\label{sect:dynamics}
\subsection{AP's Forwarding Probability $P_{\mathrm{fwd}}$ and Rate $\lambda$}
Not all tasks arriving at the AP are forwarded to the server, so it is crucial to derive both the probability and rate at which tasks are actually admitted. In this section, we present the forwarding rate $\lambda$ at the AP and provide its solution.

\begin{lemma}\label{lem:forwarding-prob}
If the arrival mechanism of status-update packets is statistically independent of the server’s queuing process, then the AP forwards an incoming task with probability 
\begin{align}
    P_{\mathrm{fwd}} :=& Pr\{\hat{c}(t)>0\} = P_{idle},
\end{align}
and the resulting forwarding rate is 
\begin{equation} \label{eq:lambda}
    \lambda = \lambda_{in}P_{idle}.
\end{equation}
\end{lemma}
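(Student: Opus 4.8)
The plan is to establish the two claimed identities separately, treating the forwarding-probability identity $P_{\mathrm{fwd}} = P_{idle}$ as the substantive part and the rate identity $\lambda = \lambda_{in}P_{idle}$ as an immediate consequence of Poisson thinning. Throughout I would work with the queue process $c(\cdot)$ assumed to be in its stationary regime, so that $\Pr\{c(t)>0\} = P_{idle} = 1 - B(\rho,C)$ by \eqref{eq:p-idle}; the fact that $\rho$ itself depends on $\lambda$, and hence on the very probability being computed, is a fixed-point issue to be resolved separately rather than part of this structural lemma.

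First I would describe the cached process $\hat c(\cdot)$ explicitly as a sample-and-hold of $c(\cdot)$. Let $g_k$ denote the generation instant of the $k$-th status packet and $S_k$ its reception instant at the AP; on the interval $[S_k, S_{k+1})$ the cache holds the constant value $v_k := c(g_k)$ for a duration equal to the inter-update gap $Y_k$. The key structural observation is that, by the hypothesis that the status-update arrival mechanism is statistically independent of the server's queuing process, both the sampling instants $\{g_k\}$ and the holding durations $\{Y_k\}$ are independent of the trajectory $c(\cdot)$.

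Next I would compute the time-stationary probability $\Pr\{\hat c(t)>0\}$ via a renewal-reward argument over the update cycles. Writing the long-run fraction of time the cache is positive as $\bigl(\sum_k Y_k\,\mathbbm{1}\{v_k>0\}\bigr)/\bigl(\sum_k Y_k\bigr)$ and invoking the renewal-reward theorem, this limit equals $\mathbb{E}[Y\,\mathbbm{1}\{v>0\}]/\mathbb{E}[Y]$. Because the holding duration $Y_k$ is independent of the sampled value $v_k$, the numerator factors as $\mathbb{E}[Y]\,\Pr\{v>0\}$, so the ratio collapses to $\Pr\{v>0\}$. Finally, since the sampling instants are independent of $c(\cdot)$ and the queue is stationary and ergodic, no length-biasing occurs and each sample $v_k = c(g_k)$ carries the stationary marginal of $c$, giving $\Pr\{v>0\} = \Pr\{c(t)>0\} = P_{idle}$. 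Hence $\Pr\{\hat c(t)>0\} = P_{idle}$.

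To close, I would appeal to the PASTA property: since exogenous tasks form a Poisson process of rate $\lambda_{in}$ and, under the same independence hypothesis, are independent of the cache state they inspect, a tagged arrival sees the time-stationary cache and is forwarded with probability $\Pr\{\hat c(t)>0\} = P_{idle}$, establishing $P_{\mathrm{fwd}} = P_{idle}$; thinning the Poisson stream by this probability then yields $\lambda = \lambda_{in}P_{idle}$ as in \eqref{eq:lambda}. I expect the main obstacle to be the rigorous justification that the sample-and-hold cache inherits the stationary marginal of $c$, specifically disentangling the holding-duration weighting from the sampled value, which is exactly where the independence hypothesis does the heavy lifting; the PASTA and thinning steps are then standard.
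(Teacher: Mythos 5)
Your proposal is correct and follows essentially the same route as the paper: a renewal--reward argument over update cycles, factoring $\mathbb E[Y\,\mathbbm 1\{v>0\}]=\mathbb E[Y]\Pr\{v>0\}$ via the independence hypothesis (the paper phrases this as $\operatorname{Cov}(R_k,Y_k)=0$ with $R_k=\mathbbm 1_{\{c(S_k^-)>0\}}$), and then PASTA plus Poisson thinning for the rate. The only cosmetic difference is that you sample $c$ at the generation instant while the paper samples at the reception instant $S_k^-$; under the stated independence and stationarity both yield $P_{\mathrm{idle}}$.
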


\begin{proof}
For each update cycle $k$ set
\begin{equation}
   R_k := \mathbbm 1_{\{c(S_k^-)>0\}},
\end{equation}
where the superscript ``$-$'' means the left limit, so the packet that
arrives at $S_k$ itself is not yet taken into account.  
Thus $R_k=1$ \textit{iff} the server is \emph{not} fully occupied just before
the cache is refreshed.

Immediately after a packet arrives the AP rewrites its cache,
$\hat c(S_k)=c(S_k^-)$, and keeps that value unchanged for the whole
interval $[S_k,S_{k+1})$.
Hence, for every $t\in[S_k,S_{k+1})$,
\begin{align}
   I(t) := & \mathbbm 1_{\{\hat c(t)>0\}} = R_k .
\end{align}

The ``reward'' earned in cycle $k$ is the total time
during which the AP believes forwarding is allowed:
\begin{align}
   Q_k :=& \int_{S_k}^{S_{k+1}} I(t)\,\mathrm dt = R_k\,Y_k . 
\end{align}
For any horizon $T>0$ choose $n(T):=\max\{k:S_k<T\}$.
Partition $[0,T]$ into complete cycles and the final stub:
\begin{equation}
   \frac1T\int_{0}^{T} I(t)\,\mathrm dt
   =\frac{1}{T}\sum_{k=0}^{n(T)-1} Q_k
    \;+\;
    \frac{1}{T}\int_{S_{n(T)}}^{T} I(t)\,\mathrm dt .
\end{equation}
Because $I(t) \in \{0, 1\}$, the stub term $\tfrac{1}{T}\int_{S_{n(T)}}^{T} I(t)$ is bounded above by
$\tfrac{(T-S_{n(T)})}{T}\le \tfrac{Y_{n(T)}}{T}\to0$ almost surely as $T\to\infty$.
Consequently,
\begin{equation}
   \lim_{T\to\infty}\frac1T\int_{0}^{T} I(t)\,\mathrm dt
   = \lim_{T\to\infty}
     \frac{1}{\sum_{k=0}^{n(T)-1} Y_k}\sum_{k=0}^{n(T)-1} Q_k .
\end{equation}

Since the cycles $(Y_k,Q_k)$ are i.i.d., through the renewal reward theorem \cite{ross1995stochastic}, the long-run average exists
and equals the ratio of expectations:
\begin{align}
   P_{\mathrm{fwd}}
                    = \frac{\mathbb E[Q_k]}{\mathbb E[Y_k]} . 
\end{align}

Because the update sampling mechanism is independent of the content $c(t)$, $R_k$ and $Y_k$ are uncorrelated, i.e., $\operatorname{Cov}(R_k,Y_k)=0$.
Hence
\begin{equation}
    \mathbb E[Q_k]
   =\mathbb E[R_k\,Y_k]
   =\mathbb E[R_k]\,\mathbb E[Y_k].
\end{equation}
Therefore, the forwarding probability is
\begin{align}
   P_{\mathrm{fwd}} =\mathbb E[R_k] =\Pr\{c(S_k^-)>0\} =P_{\mathrm{idle}} . 
\end{align}

User tasks arrive at the AP as a Poisson flow of rate
$\lambda_{\mathrm{in}}$ and are independent of $\hat c(t)$. Thinning a Poisson process therefore yields a forwarded Poisson flow of rate
\begin{equation}
  \lambda = \lambda_{\mathrm{in}}\,P_{\mathrm{idle}}.
\end{equation}
\qedhere
\end{proof}

\textbf{\emph{Solving \eqref{eq:lambda}.}}
Combining the two relations, \eqref{eq:p-idle} and \eqref{eq:lambda}, and
eliminates $P_{\mathrm{idle}}$ and yields a single equation for
$\lambda$:
\begin{equation}\label{eq:fp}
  \lambda
  \;=\;
  \lambda_{\mathrm{in}}
  \Bigl[1-B\!\bigl(\tfrac{\lambda}{\mu},\,C\bigr)\Bigr].
\end{equation}
We can solve this using fixed-point iteration.

\noindent \textit{An simple explanation for using fixed-point iteration: }
Let
$g(\lambda):=\lambda_{\mathrm{in}}\bigl[1-B(\lambda/\mu,\,C)\bigr]$
denote the right-hand side of~\eqref{eq:fp}.  
Observe two simple facts:
\begin{itemize}
    \item $B(\cdot,C)$ is strictly increasing, so $1-B(\cdot,C)$ is strictly decreasing; hence $g(\lambda)$ is strictly decreasing in $\lambda$.
  \item $g(0)=\lambda_{\mathrm{in}}$ (above the line $y=\lambda$),  
  whereas  
  $g(\lambda_{\mathrm{in}})=\lambda_{\mathrm{in}}\bigl[1-
  B(\lambda_{\mathrm{in}}/\mu,\,C)\bigr]\;<\;\lambda_{\mathrm{in}}$
  (below the line $y=\lambda$).
\end{itemize}
Therefore, the two curves $y=\lambda$ and $y=g(\lambda)$ cross \emph{exactly once} on the interval $[0,\lambda_{\mathrm{in}}]$.  
Denote this unique intersection by $\lambda^{\ast}$, and finally we have $\lambda^{\ast} = g(\lambda^{\ast})$.

Therefore, in practice, a one-line fixed-point iteration converges monotonically to
$\lambda^{\ast}$:
\begin{align}
  \lambda^{(k+1)} \;=\; 
  \lambda_{\mathrm{in}}\Bigl[1-
  B\!\bigl(\tfrac{\lambda^{(k)}}{\mu},\,C\bigr)\Bigr],
  \qquad
  \lambda^{(0)} = \lambda_{\mathrm{in}} .
\end{align}
Starting from $\lambda^{(0)}$ (or any value in
$[0,\lambda_{\mathrm{in}}]$), the sequence
$\{\lambda^{(k)}\}$ moves downward and converges to the unique
fixed point $\lambda^{\ast}$. Algorithm \ref{alg:solve-lambda} shows this procedure. We set the max-iteration $I_{max}=500$ and $\varepsilon = 10^{-12}$ as default values in our experiments.
\begin{algorithm}[htbp]
\caption{Fixed-point iteration for solving $\lambda$}
\label{alg:solve-lambda}
\KwIn{exogenous rate $\lambda_{\mathrm{in}}$, service rate $\mu$,
      thread count $C$, tolerance $\varepsilon$, max‐iteration $I_{\max}$}
\KwOut{forwarding rate $\lambda$}

\BlankLine
$\lambda^{(0)} \leftarrow \lambda_{\mathrm{in}}$\;
\For{$k \leftarrow 0$ \KwTo $I_{\max}-1$}{
      $\rho \leftarrow \lambda^{(k)} / \mu$ \;
      $\lambda^{(k+1)} \leftarrow \lambda_{\mathrm{in}}\,(1-B(\rho,C))$ \;
      \If{$\lvert\lambda^{(k+1)} -\lambda^{(k)} \rvert < \varepsilon$}{
            \Return{$\lambda^{(k+1)}$}\;
      }
  }
\end{algorithm}

\subsection{Theoretical Success Probability and Resource Hazard Rate}\label{sec:hazard_rate}
Lemma \ref{lem:forwarding-prob} specifies only the \emph{arrival rate} of forwarded tasks; it does not guarantee that every forwarded task will eventually be executed.
In practice, a task’s success also depends on the server’s instantaneous resource availability. 

\begin{lemma}\label{lem:theoretical_p}
Under a Poisson arrival process of rate \(\lambda_{\mathrm{in}}\) and a server governed by the Erlang–B loss model, the theoretical task–success probability of the system admits the closed‐form
\begin{equation}
  P_{\mathrm{succ}}
  =P_{\mathrm{fwd}}\times P_{\mathrm{idle}}
  =P_{\mathrm{idle}}\times P_{\mathrm{idle}}\,.
\end{equation}
\end{lemma}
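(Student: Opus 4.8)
The plan is to decompose the event ``an arriving task eventually completes on the server'' into two sequential gates and evaluate each separately. A task that reaches the AP at time $t$ succeeds if and only if (i) the cached state permits forwarding, i.e.\ $\hat c(t)>0$, and (ii) upon reaching the server it finds at least one idle thread. Writing $P_{\mathrm{accept}}$ for the probability of the second event conditioned on forwarding, I would first establish the factorization
\begin{equation}
  P_{\mathrm{succ}} = P_{\mathrm{fwd}}\times P_{\mathrm{accept}},
\end{equation}
and then show that both factors equal $P_{\mathrm{idle}}$.

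The forwarding factor is immediate: Lemma~\ref{lem:forwarding-prob} already gives $P_{\mathrm{fwd}} = \Pr\{\hat c(t)>0\} = P_{\mathrm{idle}}$, so no new work is needed there. For the acceptance factor I would invoke the thinning result of the same lemma, namely that the forwarded stream is a Poisson process of rate $\lambda = \lambda_{\mathrm{in}}P_{\mathrm{idle}}$ offered to an Erlang-loss server with $C$ threads and offered load $\rho=\lambda/\mu$. Because the admitted stream is Poisson, the PASTA property \cite{ross1995stochastic} guarantees that a forwarded task sees the server in its time-stationary state; hence it is blocked exactly with the Erlang-B probability $B(\rho,C)$ and admitted with the complementary probability
\begin{equation}
  P_{\mathrm{accept}} = 1-B(\rho,C) = P_{\mathrm{idle}}.
\end{equation}
Substituting the two factors yields $P_{\mathrm{succ}} = P_{\mathrm{idle}}\times P_{\mathrm{idle}}$, and since $P_{\mathrm{fwd}}=P_{\mathrm{idle}}$ this is exactly the claimed closed form.

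The step I expect to be the main obstacle is justifying the \emph{product} in the factorization, i.e.\ that the AP gate and the server gate may be multiplied rather than treated as correlated events. The forwarding decision is dictated by $\hat c(t)=c(S_{N(t)}^-)$, the server occupancy sampled at the most recent update, whereas acceptance depends on the occupancy $c(t+F)$ at the instant the task actually reaches the server; these two snapshots of the same birth--death occupancy process are in general positively correlated through the holding window $H=D+A(t)+F$. My plan is to lean on the standing assumption---carried over from Lemma~\ref{lem:forwarding-prob}---that the status-update mechanism is statistically independent of the queuing process, so that conditioning on the forwarding event does not distort the stationary law seen at the server-arrival epoch, and each gate contributes its marginal $P_{\mathrm{idle}}$. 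I would be explicit that this produces only the baseline, staleness- and delay-agnostic estimate $P_{\mathrm{idle}}^2$: the residual correlation induced by the holding window $H$ is precisely what the subsequent upper and lower bounds are designed to recover through the Laplace transforms of $D$, $A(t)$, and $F$.
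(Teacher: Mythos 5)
Your proposal takes essentially the same route as the paper, which states the result is immediate from the same two-gate decomposition (forward with probability $P_{\mathrm{idle}}$, then be admitted with probability $P_{\mathrm{idle}}$) and omits the proof entirely. Your added PASTA justification for the acceptance factor and your explicit acknowledgment that the product form rests on treating the two occupancy snapshots as decoupled --- with the residual correlation deferred to the bounds --- are consistent with, and somewhat more careful than, the paper's own treatment.
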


The result of Lemma \ref{lem:theoretical_p} is immediate from the fact that a task is forwarded if and only if the server has at least one idle thread (probability \(P_{\mathrm{idle}}\)), and a forwarded task succeeds precisely when the server remains idle at the instant of admission (again probability \(P_{\mathrm{idle}}\)); hence the proof is omitted.  

Although Lemma \ref{lem:theoretical_p} offers valuable theoretical insight, it provides no guarantee on the stochastic behavior of the system. Therefore, given the task-arrival rate, we proceed to derive a rigorous lower bound on the task–success probability $P_\text{succ}$. 
At first, we  derive the server resource-exhaustion process.
\begin{lemma}\label{lem:hazard}
Let 
\begin{equation}
  P_1 \;:=\;\Pr\{c(t)=1\}
        \;=\;\frac{\rho^{\,C-1}/(C-1)!}{\sum_{k=0}^C\rho^k/k!}
        \;=\;B\bigl(\rho,C-1\bigr)
\end{equation}
be the steady‐state probability that exactly one thread is idle.  Then the hazard rate at which the server becomes fully occupied (i.e.\ transitions from one idle thread to zero) is
\begin{equation}
  \Lambda \;=\;\lambda \,P_1.
\end{equation}
\end{lemma}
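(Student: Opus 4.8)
The plan is to treat the server occupancy as a continuous-time birth--death Markov chain and to identify the exhaustion event with a single directed transition, whose long-run rate then follows from elementary rate conservation. Under the independence hypothesis of Lemma~\ref{lem:forwarding-prob}, forwarded tasks reach the server as a Poisson stream of rate $\lambda$, and each of the $C$ threads serves at rate $\mu$; hence the occupancy process is the classical M/M/$C$/$C$ loss chain. Writing the state as the number of idle threads $c(t)\in\{0,1,\dots,C\}$, an arrival decrements $c$ by one whenever $c\ge 1$ (rate $\lambda$), while a service completion increments $c$ by one at rate $(C-c)\mu$. First I would record the stationary law, which takes the Erlang form $\Pr\{c=j\}=\bigl(\rho^{\,C-j}/(C-j)!\bigr)\big/\sum_{k=0}^{C}\rho^{k}/k!$; specializing to $j=1$ recovers exactly the expression for $P_1$ asserted in the statement.

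Next I would observe that the server becomes \emph{fully occupied} precisely when $c$ jumps from $1$ to $0$, and that this downward jump is triggered \emph{only} by an arrival --- service completions always move $c$ in the opposite direction. Thus the exhaustion events form the point process generated by the single edge $1\to 0$ of the chain. The key step is then to invoke rate conservation for ergodic CTMCs: the long-run frequency of transitions along a given edge out of state $i$ equals the stationary probability of $i$ times the instantaneous rate of that edge. Applied to the edge $1\to 0$, whose rate is $\lambda$, this gives a transition frequency of $\lambda\,\Pr\{c=1\}=\lambda P_1$, i.e.\ the claimed $\Lambda$.

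To keep this self-contained I would reuse the renewal-reward argument already deployed in Lemma~\ref{lem:forwarding-prob} \cite{ross1995stochastic}: over a horizon $T$ the fraction of time spent in $\{c=1\}$ converges almost surely to $P_1$, and while residing there the memoryless arrivals fire the $1\to 0$ jump at rate $\lambda$, so counting these jumps and dividing by $T$ yields $\Lambda=\lambda P_1$ in the limit. I expect the only genuinely delicate point to be the interpretation of $\Lambda$ as a \emph{hazard} (instantaneous conditional) rate rather than a mere time-averaged frequency: one must argue that, since the arrival stream is Poisson and independent of the occupancy history, the conditional rate of exhaustion \emph{given} that exactly one thread is free is itself $\lambda$, so the state-averaged hazard is $\lambda P_1$ with no correction term. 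That independence is exactly what licenses multiplying the per-state hazard $\lambda$ by the occupancy probability $P_1$, and it is where the modeling assumption of Lemma~\ref{lem:forwarding-prob} does the real work.
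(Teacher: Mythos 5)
Your proof is correct and follows essentially the same route as the paper's: both identify resource exhaustion with an arrival that finds exactly one idle thread and conclude $\Lambda=\lambda P_1$ by weighting the Poisson admission rate with the stationary probability of that state. The only cosmetic difference is that you justify the product via CTMC edge-flow rate conservation for the $1\to 0$ transition (backed by renewal-reward), whereas the paper invokes PASTA directly; for Poisson admissions into this loss chain the two arguments are interchangeable.
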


\begin{proof}
We wish to compute the instantaneous \emph{hazard rate} $\Lambda$ at which the process $c(t)$ jumps from state $1$ to state $0$.  Since $c(t)\in\{0,1,\dots,C\}$ evolves by single‐step births (service completions) and deaths (task admissions), the only way to go from $1$ to $0$ is for a task admission to occur when $c(t^-)=1$.

By construction, the AP admits (forwards) user tasks as a Poisson process of rate $\lambda$.
By PASTA, an admitted task observes the server occupancy in steady state.  Hence the rate at which admissions find \(c(t)=1\) is
\[
  \lambda \;\times\;\Pr\{c(t)=1\}
  \;=\;\lambda\,P_1.
\]
Such an admission necessarily drives $c(t)$ from $1$ down to $0$, since each admitted task occupies one idle thread.  Therefore this product is exactly the hazard rate of resource exhaustion:
\[
  \Lambda
  = \lambda\,P_1.
\]
\end{proof}

Lemma \ref{lem:hazard} means a larger blocking probability (fewer threads or heavier load) implies a smaller $P_{\mathrm{idle}}$ and therefore a smaller $\Lambda$: With less spare capacity, fewer incoming tasks can actually seize the last idle thread.

\subsection{Holding-Window Survival Probability}
After deriving the hazard rate $\Lambda$, we further leverage it to compute the probability that the resource pool remains non-exhausted throughout a given holding window $H$, i.e., the holding-window survival probability. 

Define the stopping time
$T:=\inf\{t>0 : c(t)=0\}$, i.e.\ the next instant at which all
threads become busy.  
Let
\begin{equation}
   S(x):=\Pr\{T>x\},
\end{equation}
the \emph{survival probability} that \emph{no} resource–exhaustion occurs in the entire
interval $[0,x]$.

\begin{lemma}\label{lem:survival}
With hazard rate $\Lambda=\lambda P_1$,
\begin{equation}\label{eq:s_x}
   S(x)=e^{-\Lambda x}, \qquad x\ge0 .
\end{equation}
\end{lemma}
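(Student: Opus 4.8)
The plan is to recognize the claim as a statement that the first-passage time $T$ to full occupancy is exponentially distributed with rate $\Lambda$, and to derive this through the hazard-rate characterization of the survival function. By definition of the conditional failure rate,
\begin{equation}
  h(x) \;=\; \lim_{\delta\downarrow 0}\frac{\Pr\{x < T \le x+\delta \mid T > x\}}{\delta} \;=\; -\frac{S'(x)}{S(x)},
\end{equation}
so that once I establish $h(x)\equiv\Lambda$ for every $x\ge 0$, the result reduces to an elementary first-order ODE. First I would set up this identity and observe that $S(0)=1$, since exhaustion cannot occur instantaneously.

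The core step is to argue that the hazard rate is \emph{constant} and equals $\Lambda=\lambda P_1$. Because admissions form a Poisson flow of rate $\lambda$ and are therefore memoryless, and because the occupancy chain $c(t)$ is observed in its stationary regime, the conditional chance that the exhausting transition $1\to 0$ fires in the next $\delta$-window — given that none has fired so far — depends only on the instantaneous admission intensity acting while the last idle thread is exposed, not on the elapsed survival time $x$. Invoking PASTA exactly as in Lemma \ref{lem:hazard}, an arriving admission finds state $c=1$ with stationary probability $P_1$, and each such admission deterministically drives the server to $0$. Hence $\Pr\{x<T\le x+\delta\mid T>x\}=\Lambda\,\delta+o(\delta)$, which yields $h(x)=\Lambda$.

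With a constant hazard rate, the survival function solves
\begin{equation}
  S'(x) \;=\; -\Lambda\,S(x), \qquad S(0)=1,
\end{equation}
whose unique solution is $S(x)=e^{-\Lambda x}$, establishing \eqref{eq:s_x}.

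The main obstacle I expect is justifying the constancy of $h(x)$. Strictly speaking, the first-passage time of the birth–death chain $c(t)$ out of the high-occupancy region is phase-type rather than exactly exponential, because the true instantaneous rate depends on the full conditional law of $c(t)$ given $\{T>x\}$, which need not coincide with the stationary law once we condition on survival. The clean exponential form is thus a stationary / Poisson-clock approximation, accurate when exhaustion is rare relative to the mixing of $c(t)$, so that conditioning on survival perturbs the occupancy distribution only negligibly. I would make this modeling assumption explicit and treat $\Lambda=\lambda P_1$ as the long-run exhaustion intensity supplied by Lemma \ref{lem:hazard}, which is precisely what renders the hazard rate constant and delivers the memoryless survival law.
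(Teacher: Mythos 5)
Your proof is correct and rests on the same modeling step as the paper's: the paper simply declares the exhaustion events to form a Poisson process $M_\Lambda(t)$ of rate $\Lambda$ and computes $S(x)=\Pr\{M_\Lambda(x)=0\}=e^{-\Lambda x}$, which is exactly your constant-hazard ODE argument in different clothing. Your explicit caveat that the true first-passage time of the birth--death chain $c(t)$ is phase-type rather than exponential, so that the constant hazard $\Lambda=\lambda P_1$ is a stationary/Poisson-clock approximation valid when exhaustion is rare relative to the mixing of $c(t)$, is a point the paper leaves unstated and is worth making explicit.
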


\begin{proof}
Treat the exhaustion events as a Poisson process
$M_\Lambda(t)$ with rate $\Lambda$.
Then $T>x$ is equivalent to ``zero events occur in $[0,x]$'', i.e., $M_\Lambda(x)=0$.
Therefore
\begin{equation}
   S(x)=\Pr\{M_\Lambda(x)=0\}
        =e^{-\Lambda x}.
\end{equation}
\end{proof}

Eq. \eqref{eq:s_x} quantifies the chance that \emph{every} moment within a window of length $x$ will find at least one idle thread.

Lemma \ref{lem:survival} shows that if the holding window has a
\emph{fixed} length~$x$, the probability of keeping at least one idle
thread for the whole interval equals
$S(x)=e^{-\Lambda x}$.
In reality, deciding the window length $x=H$ in \eqref{eq:effective_window} is non-trivial, because the dynamics existed in $D$, $A(t)$, and $F$.

\begin{lemma}\label{lem:SH-final}
Under the usual independence assumptions (i) the update sequence $\{Y_k\}$ is independent of the user–task arrival process of rate $\lambda_{\mathrm{in}}$;
(ii) the AP forwards a task \emph{iff} $\hat c(t)>0$, the survival probability for the holding window $H$ is
\begin{equation}\label{eq:SH-final}
   S(H)=
   \frac{1-L_Y(\Lambda)}{\Lambda\,\mathbb E[Y]}
   \;\frac{\gamma}{\Lambda+\gamma}
   \;\frac{\beta}{\Lambda+\beta},
\end{equation}
where $L_Y(\Lambda)=\int_{0}^{\infty}{e^{-\Lambda y}dF_Y\left(y\right)}=\mathbb E[e^{-\Lambda Y}]$ is the
Laplace–Stieltjes transform (LST) of the inter-update gap~$Y$
evaluated at $s=\Lambda$.
\end{lemma}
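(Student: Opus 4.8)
The plan is to recognize that the quantity sought is simply the Laplace transform of the holding window evaluated at the hazard rate. By Lemma~\ref{lem:survival}, a \emph{deterministic} window of length $x$ survives with probability $e^{-\Lambda x}$. Since the holding window $H=D+A(t)+F$ in \eqref{eq:effective_window} is independent of the resource-exhaustion process driven by $\Lambda$ (the update sequence $\{Y_k\}$ is independent of the task arrivals, per the stated assumptions), I would condition on the realized value of $H$ and apply the tower property to obtain $S(H)=\mathbb E\bigl[e^{-\Lambda H}\bigr]=L_H(\Lambda)$. The first concrete step is then to invoke the mutual independence of $D$, $A(t)$, and $F$ assumed just after \eqref{eq:effective_window} to factor this into the product $\mathbb E[e^{-\Lambda D}]\,\mathbb E[e^{-\Lambda A}]\,\mathbb E[e^{-\Lambda F}]$.

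The two exponential factors are immediate and supply two of the three terms in \eqref{eq:SH-final}. Because $D\sim\mathrm{Exp}(\gamma)$ and $F\sim\mathrm{Exp}(\beta)$, their Laplace transforms evaluated at $s=\Lambda$ are $\mathbb E[e^{-\Lambda D}]=\gamma/(\Lambda+\gamma)$ and $\mathbb E[e^{-\Lambda F}]=\beta/(\Lambda+\beta)$. These require no further argument beyond the standard formula for the transform of an exponential variable.

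The substantive work lies in the age factor $\mathbb E[e^{-\Lambda A}]$. Here I would appeal to renewal theory: since the inter-update gaps $\{Y_k\}$ are i.i.d., the long-run (time-stationary) age $A(t)$ admits the equilibrium density $f_A(a)=(1-F_Y(a))/\mathbb E[Y]$, consistent with the renewal-reward viewpoint already used in Lemma~\ref{lem:forwarding-prob}. Then $\mathbb E[e^{-\Lambda A}]=\frac{1}{\mathbb E[Y]}\int_{0}^{\infty}e^{-\Lambda a}\bigl(1-F_Y(a)\bigr)\,\mathrm da$. To collapse this integral I would write $1-F_Y(a)=\Pr\{Y>a\}=\int_{a}^{\infty}\mathrm dF_Y(y)$ and interchange the order of integration by Fubini, giving $\int_{0}^{\infty}\tfrac{1-e^{-\Lambda y}}{\Lambda}\,\mathrm dF_Y(y)=\tfrac{1-L_Y(\Lambda)}{\Lambda}$. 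Dividing by $\mathbb E[Y]$ yields precisely the leading factor $\bigl(1-L_Y(\Lambda)\bigr)/\bigl(\Lambda\,\mathbb E[Y]\bigr)$ of \eqref{eq:SH-final}, and multiplying the three factors completes the derivation.

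I expect the main obstacle to be conceptual rather than computational: one must justify that the appropriate age is the \emph{stationary} (equilibrium) version seen under the long-run time average, as opposed to the forward recurrence time or an arbitrary instantaneous snapshot. This identification is what legitimizes the density $f_A(a)=(1-F_Y(a))/\mathbb E[Y]$; once it is fixed, the Fubini step and the exponential transforms are routine, and the independence assumption cleanly decouples the three contributions.
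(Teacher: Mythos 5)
Your proposal is correct and follows essentially the same route as the paper: both identify $S(H)=\mathbb E[e^{-\Lambda H}]=\mathcal L_H(\Lambda)$, factor it by the independence of $D$, $A$, and $F$, use the standard exponential transforms for the link delays, and obtain the staleness factor from the stationary-age density $f_A(a)=(1-F_Y(a))/\mathbb E[Y]$. If anything, your explicit Fubini computation of $\mathbb E[e^{-\Lambda A}]=(1-L_Y(\Lambda))/(\Lambda\,\mathbb E[Y])$ fills in a step the paper leaves implicit.
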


\begin{proof}
Since $D$ and $F$ are exponential, $f_D(d)=\gamma e^{-\gamma d}$ and $f_F(f)=\beta e^{-\beta f}$.
Because $A(t)$ is the \emph{stationary age} (residual life) of the renewal
process $\{Y_k\}$, its tail probability is obtained by the
well-known renewal–reward identity
\begin{equation}
   \Pr\{A(t)>a\}
   =\frac{\displaystyle\int_{a}^{\infty}\Pr\{Y>u\}\,du}
          {\mathbb E[Y]}
   =\frac{\displaystyle\int_{a}^{\infty}\bigl[1-F_Y(u)\bigr]\,du}
          {\mathbb E[Y]},
\end{equation}
where $a\ge0$. Differentiating with respect to $a$ yields the pdf of $A(t)$:
\begin{equation}
   f_A(a)
   = -\frac{d}{da}\Pr\{A>a\}
   = \frac{1-F_Y(a)}{\mathbb E[Y]}
   = \frac{\Pr\{Y>a\}}{\mathbb E[Y]}.
\end{equation}

Because $D$, $A(t)$ and $F$ are independent, their sum has
density
$f_H = f_D * f_A * f_F$, i.e.\
\begin{equation}
   f_H(h)=\int_{0}^{h}
           \!\!\int_{0}^{u}
           f_D(d)\,f_A(u-d)\,f_F(h-u)\,dd\,du .
\end{equation}

For a realised window length $h$,
$S(h)=e^{-\Lambda h}$.
Averaging over the distribution of $H$ gives
\begin{equation}\label{eq:conv}
   S(H)=\int_{0}^{\infty} e^{-\Lambda h}\,f_H(h)\,dh .
\end{equation}

Eq. \eqref{eq:conv} involves multiple convolutions, making it difficult to solve. 

For any non–negative random variable $X$, its LST is defined on the real half-line as
\begin{equation}
    \mathcal{L}_{X}(s)=\mathbb{E}\!\bigl[e^{-sX}\bigr], 
   \qquad s\ge 0 .
\end{equation}
Since $\Lambda \ge 0$, therefore, we introduce the LST; taking $s=\Lambda$ yields
\begin{equation}
   \mathcal L_H(\Lambda)=
   \frac{1-L_Y(\Lambda)}{\Lambda\,\mathbb E[Y]}\;
   \frac{\gamma}{\Lambda+\gamma}\;
   \frac{\beta}{\Lambda+\beta}.
\end{equation}
Because $\mathcal L_H(\Lambda)$ is exactly the integral, \eqref{eq:SH-final} follows.
\end{proof}

The factor $\tfrac{1-L_Y(\Lambda)}{\Lambda\mathbb E[Y]}$
captures the impact of \emph{information staleness};
the other two factors quantify attenuation due to
\emph{uplink} and \emph{downlink} delays.
Since $A(t)$ is already the stationary age of the renewal process and
does not interact with queueing or forwarding, \emph{explicitly}
including or omitting the waiting interval~$A$ leaves the
final success-probability formula unchanged;
all waiting effects are absorbed by $G_Y(\Lambda)$.

\begin{remark}
Our methodology may also adaptable for other distributions of uplink and downlink delays.
Meanwhile, to port all continuous-time derivations into a slotted system with slot width $u$, multiply every ``per-second'' rate
$(\lambda, \mu, \gamma, \beta, \Lambda)$ by $u$ to obtain the
corresponding ``per-slot'' probabilities $(\alpha, q, \gamma_d, \beta_d, \varphi)$.
Then make the following one-for-one substitutions throughout:
$
   e^{-\Lambda x}\;\mapsto\;(1-\varphi)^{x},
$, 
$
   L\text{aplace transform}\;\mapsto\;z\text{-transform}
$, and
$
   \int \; \mapsto\; \sum .
$
With these replacements the entire analytical framework, carries over unchanged to discrete time.
\end{remark}

\section{Two Boundaries}

For brevity we henceforth write $P_{\mathrm{succ}}$ in the compact form
\begin{equation}\label{eq:p_success}
   P_{\text{succ}} \;=\; \bigl(1- B(\rho,C)\bigr)\,
   G_Y(\Lambda)\,
   G_\gamma(\Lambda)\,
   G_\beta(\Lambda),
\end{equation}
where
\[
   G_Y(\Lambda)=\frac{1-L_Y(\Lambda)}{\Lambda\mathbb E[Y]}, 
   G_\gamma(\Lambda)  = \frac{\gamma}{\Lambda+\gamma}, 
   G_\beta(\Lambda)   = \frac{\beta}{\Lambda+\beta}.
\]

By \eqref{eq:p_success}, we can derive:
\begin{enumerate}
\item 
      If both link rates tend to infinity
      $(\gamma,\beta\to\infty)$, their delays vanish
      $(D,F\to0)$, then $G_\gamma(\Lambda)$ and $G_\beta(\Lambda)$ tend to $1$. Therefore, \eqref{eq:Psucc} reduces to
      \begin{equation}\label{eq:no_gamma_beta}
        P_{\text{succ}}
        \;\longrightarrow\;
        \bigl[1-B(\rho,C)\bigr]\;
        \frac{1-L_Y(\Lambda)}{\Lambda\,\mathbb E[Y]} .
      \end{equation}
\item 
      Taking \ref{eq:no_gamma_beta} and further letting the update rate
      $\bar r=1/\mathbb E[Y]$ grow without bound drives the information staleness
      factor to one, so that
      \begin{equation}
        P_{\text{succ}}
        \;\longrightarrow\;
        1-B(\rho,C),
      \end{equation}
      i.e.,\ the ideal Erlang-loss upper bound in which the AP
      possesses perfectly fresh information.
\end{enumerate}

\emph{\textbf{Ideal upper bound} $u_{ideal}$.}
When both links have infinite bandwidth ($\gamma,\beta\!\to\!\infty$) and
the update rate tends to infinity ($\bar r\!\to\!\infty$),
$D\!=\!F\!=\!0$ and $G_Y\!=\!1$,
so the success probability is ideally bounded by
\begin{equation}
    u_{idle} \; = \; 1-B(\rho,C).
\end{equation}

\emph{\textbf{Operational lower bound} $l_{opt}$.} 
Combining the AP forwarding probability $P_{\mathrm{idle}}$ with the holding window survival probability $S(H)$ in \eqref{eq:SH-final} yields the following lower-bound task–success probability:
\begin{equation}\label{eq:Psucc}
   P_{\text{succ}}
   \;=\;
   P_{\mathrm{idle}}\;S(H).
\end{equation}

Recall that, by LST,
\begin{equation}
   G_Y(\Lambda)\,G_\gamma(\Lambda)\,G_\beta(\Lambda)
   =\mathbb{E}\!\bigl[e^{-\Lambda(\Delta+F)}\bigr],
\end{equation}
where $\Delta=D+A$.
It lets us apply Jensen’s inequality\footnote{Because $\varphi(x)=e^{-\Lambda x}$ is convex, thus $\mathbb E[\varphi(X)]\ge\varphi(\mathbb E[X])$.} to the convex function $x\mapsto e^{-\Lambda x}$ in one step, instead of handling three factors separately.
Applying Jensen's inequality,
\begin{equation}
   \mathbb{E}\!\bigl[e^{-\Lambda(\Delta+F)}\bigr]
   \;\ge\;
   e^{-\Lambda\,\mathbb{E}[\Delta+F]}
   =e^{-\Lambda\bigl(\mathbb{E}[\Delta]+1/\beta\bigr)},
\end{equation}
which plugged back into
$P_{\text{succ}}=(1-B(\rho,C))\,\mathbb{E}[e^{-\Lambda(\Delta+F)}]$
gives the lower bound
\begin{equation}
   l_{opt} = (1-B(\rho,C)) e^{\!\bigl[-\Lambda\bigl(\mathbb{E}[\Delta]+1/\beta\bigr)\bigr]}.
\end{equation}

Recall that $A$ is the \emph{stationary age} (residual life)
associated with the renewal sequence $\{Y_k\}$ of update gaps.
$\mathbb{E}[A] = \int_{0}^{\infty}\Pr\{Y>a\}~da = \frac{1}{\mathbb{E}[Y]} \int_{0}^{\infty}(1-F_Y(a)) a~da $. Following tail integration formula, $\mathbb{E}[A] = \frac{\mathbb{E}[Y^2]}{2\mathbb{E}[Y]}$.
Since $D\sim\mathrm{Exp}(\gamma)$,
we have 
\begin{equation}
   \mathbb{E}[\Delta]
   =\frac{1}{\gamma}
    +\frac{\mathbb{E}[Y^{2}]}
          {2\,\mathbb{E}[Y]},
\end{equation}
where $\mathbb{E}[Y]=T$,  $\mathbb{E}[Y^2]=T^2+2/\bar{r}^2$ (see Remark \ref{remark:EY_EY2}).
Finally, we have 
\begin{equation}
    l_{opt}=(1-B(\rho,C))e^{-\Lambda(\frac{1}{\gamma} + \frac{T}{2} + \frac{1}{T\bar{r}^2}+\frac{1}{\beta})}.
\end{equation}
\begin{remark}\label{remark:EY_EY2}
Because the $k$th and $(k-1)$th status packets are generated exactly one period $T$ apart and each experiences an independent exponential uplink delay, we have  
\begin{align}
    Y = S_k-S_{k-1} &= \bigl(kT+D_k\bigr)-\bigl((k-1)T+D_{k-1}\bigr) \nonumber \\ 
      &= T+(D_k-D_{k-1}),
\end{align}

where \(D_k,D_{k-1}\stackrel{\text{i.i.d.}}{\sim}\mathrm{Exp}(\gamma)\).
Independence and identical distribution imply  
\begin{equation}
    \mathbb{E}[D_k-D_{k-1}]
  =\mathbb{E}[D_k]-\mathbb{E}[D_{k-1}]
  =\frac{1}{\gamma}-\frac{1}{\gamma}=0.
\end{equation}

Consequently, 
\begin{equation}
\mathbb{E}[Y]
       =\mathbb{E}\!\bigl[T+(D_k-D_{k-1})\bigr]
       =T.
\end{equation}

In words, the mean inter-arrival interval observed at the AP remains equal to the deterministic generation period $T$; the symmetric, zero-mean jitter introduced by the two independent exponential delays does not change the average. 

By expanding $Y^{2}=(T+Z)^{2}$ and substituting the known variance $\mathrm{Var}[Z]=2/\gamma^{2}$ of the zero-mean Laplace difference $Z=D_k-D_{k-1}$, we directly obtain $\mathbb{E}[Y^{2}]=T^{2}+2/\gamma^{2}$.
\end{remark}

\section{Experiments}
\subsection{Experiment Setup}\label{sec:exp-design}

\begin{table}[htbp]
\centering\small
\caption{Controlled variables and their respective values (time unit: seconds)}
\label{tab:exp-vars}
\renewcommand{\arraystretch}{1.15}
\begin{tabular}{p{3cm}p{3.4cm}p{1cm}}
\toprule
Parameters & Values & Default \\ \midrule
User-arrival rate $\lambda_{\mathrm{in}}$
    & $\lambda_{\mathrm{in}} \in [5, 60]$ s$^{-1}$ & $40$ \\[2pt]
Service rate $\mu$ 
    & $\mu \in [20, 60] $ s$^{-1}$ & $30$ \\[2pt]
Number of threads $C$
    & $C \in \{1, 2, 3, 4, 5, 6\}$ & $2$ \\[2pt]
Update rate $\bar r$
    & $\bar r \in [5,100]$ s$^{-1}$ & $20$ \\[2pt]
Uplink rate $\gamma$
    & $\gamma \in [60, 200]$ s$^{-1}$ & $100$ \\[2pt]
Downlink rate $\beta$
    & $\beta \in [60, 200]$ s$^{-1}$ & $100$ \\ \bottomrule
\end{tabular}
\end{table}

\textbf{\textit{Comparison and Settings.}} To verify that the analytical bounds (i.e., the ideal upper bound $u_{\text{ideal}}$ and the operational lower bound $l_{\text{opt}}$) indeed enclose the \emph{true} task-success probability across a broad range of operating conditions, we employ a one-factor-at-a-time approach. Meanwhile, we also consider the closed-form $P_{\text{succ}}$ derived in Lemma \ref{lem:theoretical_p}. All experiments are implemented using the SimPy discrete-event simulation framework\footnote{https://simpy.readthedocs.io}. In each experiment, we vary a single controlled variable from \tablename~\ref{tab:exp-vars} across its specified range, while all other parameters are held at their default settings.

We follow a simple and intuitive ``safety frame'' guideline: $\bar r / \gamma < 1$ and $,\lambda / \beta < 1$. The first condition ensures that there is no more than one status packet in transit on the uplink at any time. The second condition guarantees that the average number of user tasks on the downlink does not exceed one.

In addition, since the default service rate is $\mu = 30$ and there are $C = 2$ threads, the maximum user-arrival rate is set to $\lambda_{\mathrm{in}}^{\max} = \mu \times C = 60$ to match the server’s capacity. Meanwhile, we set the minimum service rate to $\mu = 20$, so that the default server capability $20 \times C = 40$ can match the default user-arrival rate $\lambda_{\mathrm{in}} = 40$.

\textbf{\textit{Indicator.}} For each setting of the controlled variable, we run a simulation spanning $5{,}000$ time steps. Over this period, we record only two quantities: the number of tasks that successfully complete service on the server, denoted $N_{\text{succ}}$, and the total number of tasks arriving at the AP, denoted $N_{\text{arr}}$. The empirical task-success probability is thus calculated as
\begin{equation}
\hat P_{\mathrm{succ}}
=\frac{N_{\text{succ}}}{N_{\text{arr}}}.
\end{equation}

At the conclusion of each run, we calculate the analytical references $u_{\text{ideal}}$, $l_{\text{opt}}$, and the theoretical value $P_{\text{succ}}$ using the parameter settings specific to that experiment. For visual comparison, we plot all four curves (marking the empirical results $\hat P_{\text{succ}}$ with $\circ$) on a single figure as the controlled variable is varied.

\subsection{Simulation Results}
We begin by examining how variations in the user-task arrival rate $\lambda_{\mathrm{in}}$ (\figurename~\ref{fig:lamin-sweep}) and the service rate $\mu$ (\figurename~\ref{fig:mu-sweep}) affect the task–success probability.

\begin{figure}[htbp]
  \centering
  
  \begin{subfigure}[b]{0.49\linewidth}
    \centering
    \includegraphics[width=\linewidth]{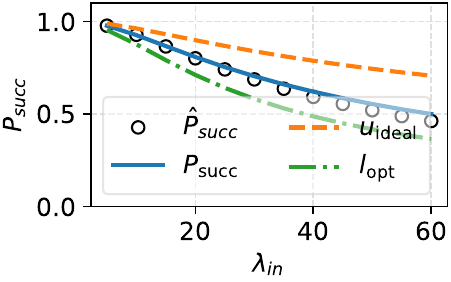}
    \caption{$P_{\mathrm{succ}}$ vs. 
             $\lambda_{\mathrm{in}}$}
    \label{fig:lamin-sweep}
  \end{subfigure}
  \begin{subfigure}[b]{0.49\linewidth}
    \centering
    \includegraphics[width=\linewidth]{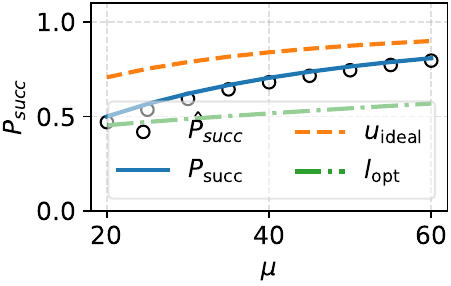}
    \caption{$P_{\mathrm{succ}}$ vs. 
             $\mu$}
    \label{fig:mu-sweep}
  \end{subfigure}
  \caption{Influence of user-task arrival rate $\lambda_{in}$ (left) and service rate $\mu$ (right) on task–success probability}
  \label{fig:lamin-mu}
\end{figure}

As shown in \figurename~\ref{fig:lamin-sweep}, the theoretical  task–success probability $P_{\mathrm{succ}}$, operational lower bound $l_{\mathrm{opt}}$, ideal upper bound $u_{\mathrm{idle}}$, and the empirical result $\hat{P}_{\mathrm{succ}}$ all decrease as the user-arrival rate $\lambda_{\mathrm{in}}$ increases, which aligns with intuition: Higher user access rates impose greater load on the server, resulting in a lower success probability. Notably, in the status-driven system, the theoretical expression for $P_{\mathrm{succ}}$ closely matches the experimental observations $\hat P_{\mathrm{succ}}$. The proposed bounds, $u_{\mathrm{idle}}$ and $l_{\mathrm{opt}}$, also effectively bracket the success probability as $\lambda_{\mathrm{in}}$ varies, demonstrating the validity of the theoretical framework.

Turning to \figurename~\ref{fig:mu-sweep}, we observe that as the service rate $\mu$ increases, the theoretical predictions again align closely with the empirical results. Overall, the two analytical bounds reliably enclose the real outcomes. Specifically, when the service rate $\mu$ is low, the lower bound provides a tight fit, with a gap of only about 1.6\% from the simulation results. As $\mu$ increases, both the theoretical value $P_{\mathrm{succ}}$ and the empirical result $\hat{P}_{\mathrm{succ}}$ rise, and are eventually constrained by the upper bound $u_{\mathrm{idle}}$, with a difference of nearly 1\%. In extreme scenarios, when $\mu$ is increased to $200$, the gap between the upper bound $u_{\mathrm{opt}}$ and the simulated result $\hat{P}_{\mathrm{succ}}$ remains within approximately 1.7\%.

\figurename ~\ref{fig:C-rbar} illustrates how the simulation results and theoretical bounds vary as $C$ and $\bar r$ are adjusted. Overall, the theoretical predictions consistently encompass the empirical results.

\begin{figure}[htbp]
  \centering
  \begin{subfigure}[b]{0.49\linewidth}
    \centering
    \includegraphics[width=\linewidth]{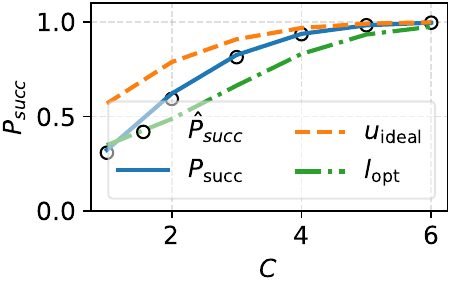}
    \caption{$P_{\mathrm{succ}}$ vs. $C$}
    \label{fig:C-sweep}
  \end{subfigure}
  \begin{subfigure}[b]{0.49\linewidth}
    \centering
    \includegraphics[width=\linewidth]{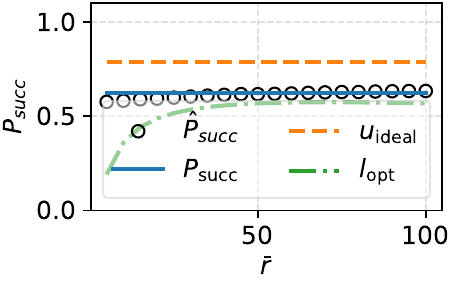}
    \caption{$P_{\mathrm{succ}}$ vs. $\bar r$}
    \label{fig:rbar-sweep}
  \end{subfigure}
  \caption{Influence of maximum number of threads $C$ (left) and status update frequency $\bar r$ (right) on task–success probability}
  \label{fig:C-rbar}
\end{figure}

As shown in \figurename~\ref{fig:C-sweep}, increasing the number of server threads $C$ continuously enhances the system’s task-handling capacity. When $C=6$, the theoretical bounds and the empirical result $\hat P_{\mathrm{succ}}$ all nearly approach 100\%, matching the theoretical value. This demonstrates that enlarging the thread pool effectively mitigates Erlang blocking. At this point, the system’s maximum processing capability, $C \times \mu = 6 \times 30 = 180$, is far greater than the default user arrival rate $\lambda_{\mathrm{in}} = 40$.

Turning to \figurename~\ref{fig:rbar-sweep}, it is noteworthy that the lower bound $l_{\mathrm{opt}}$ closely tracks the empirical results, with the smallest gap being only about 5\% (at $\bar r = 55$). Additionally, as the status update frequency $\bar r$ increases, the observed success probability shows only a modest improvement (about 5.8\%) and does not reach the theoretical maximum. This indicates that, in status-driven CFN systems, simply increasing the update frequency yields limited benefit; greater emphasis should be placed on enhancing the server’s processing capability. For smaller values of $\bar r$, we observe a larger gap between the lower bound and the measured results, which arises because a low update frequency can no longer accurately capture changes in the server state.

\figurename~\ref{fig:gamma-beta} illustrates the behavior of the analytical bounds and empirical results as the uplink and downlink rates ($\gamma$ and $\beta$) are varied. Overall, the system remains stable across these changes. This indicates that, when the number of status and task packets in transit is kept below one and both the server’s processing capacity and the user task arrival rate are stable, introducing additional transmission delay does not affect the distribution of the task-success probability—it merely increases the end-to-end latency for user tasks.

\begin{figure}[htbp]
  \centering
  \begin{subfigure}[b]{0.49\linewidth}
    \centering
    \includegraphics[width=\linewidth]{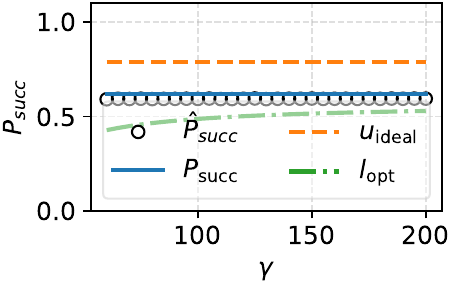}
    \caption{$P_{\mathrm{succ}}$ vs. 
             $\gamma$}
    \label{fig:gamma-sweep}
  \end{subfigure}
  \begin{subfigure}[b]{0.49\linewidth}
    \centering
    \includegraphics[width=\linewidth]{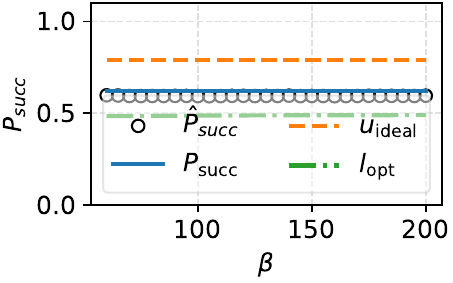}
    \caption{$P_{\mathrm{succ}}$ vs. $\beta$}
    \label{fig:beta-sweep}
  \end{subfigure}
  \caption{Influence of uplink $\gamma$ (left) and downlink $\beta$ (right) on task–success probability
}
  \label{fig:gamma-beta}
      \vspace{-12pt}
\end{figure}

Nevertheless, increasing the uplink and downlink rates can let us improve the frequency of status and task packets, thereby enhancing both the system’s computational throughput and the timeliness of status information. Combined with the findings from \figurename~\ref{fig:mu-sweep}, \figurename~\ref{fig:C-sweep}, and \figurename~\ref{fig:rbar-sweep}, this suggests that boosting link rates can further improve the overall task success probability.

\begin{figure}[htbp]
  \centering
  \begin{subfigure}[b]{0.49\linewidth}
    \centering
    \includegraphics[width=\linewidth]{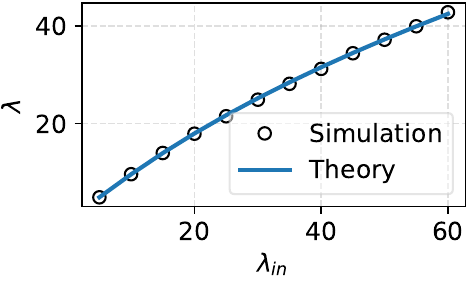}
    \caption{Forwarding rate $\lambda$}
    \label{fig:arrival-rate}
  \end{subfigure}
  \begin{subfigure}[b]{0.49\linewidth}
    \centering
    \includegraphics[width=\linewidth]{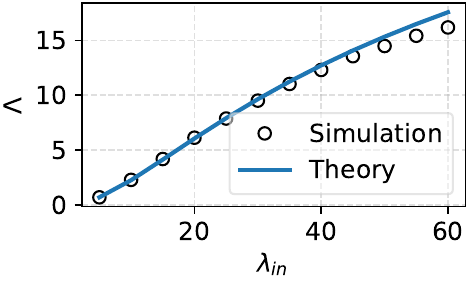}
    \caption{Resource hazard rate $\Lambda$}
    \label{fig:hazard-rate}
  \end{subfigure}
  \caption{Lemmas \ref{lem:forwarding-prob} and \ref{lem:hazard} verification via $\lambda$ and $\Lambda$}
  \label{fig:arrival-hazard}
      \vspace{-12pt}
\end{figure}

Finally, we examine how the analytical forwarding rate~$\lambda$
(Lemma~\ref{lem:forwarding-prob}) and the hazard rate~$\Lambda$
(Lemma~\ref{lem:hazard}) compare with simulation when the
exogenous arrival rate~$\lambda_{\mathrm{in}}$ is swept.
The results are plotted in Fig.~\ref{fig:arrival-hazard}.
The fixed–point solution for~$\lambda$ reproduces the measured data almost exactly across the entire range. The closed-form~$\Lambda$ also follows the trend, but begins to
slightly overestimate the measurements as $\lambda_{\mathrm{in}}$
increases. An explanation is
that when the user-arrival rate is high but status updates are slow, the AP always forwards a task \emph{after} it has detected an idle thread, so forwarded packets deliberately miss the most dangerous moments ($c(t)=1$); during the residual uplink delay the server often frees an additional thread, and some potential $1\!\rightarrow0$ events never happen—hence the measured hazard $\Lambda$ sits below the independence-based formula. In an additional test where we further increase the update rate to $\bar r=80$, the theory and measurement almost overlapped, the analytic prediction is $\Lambda\approx 17.6$ while the simulation produced $16.77$. This near-match confirms that a sufficiently high update rate removes the ``avoidance'' bias and brings the system behavior in line with the theoretical model.

\section{Discussion}
We tackle the joint process of task arrivals, periodic status refreshes, and independent link delays, and show that its behavior is fully captured by the theoretical value and two boundaries. The applicability of the proposed method can be explained as follows:
\begin{itemize}
    \item \textit{Forwarding probability.} 
    Whatever rule the AP uses, its whole effect can be written as   ``a task that reaches the AP is forwarded with probability   $P_{\text{fwd}}$ given the current number of busy threads.''   For the loss server studied here that number is simply the probability that at least one thread is still idle; if a small waiting buffer were added, $P_{\text{fwd}}$ would be the probability that at least one buffer slot is free. Once this probability is specified, the rest of the analysis remains unchanged.
    \item \textit{Wireless communication between users and the AP.} While this paper adopts a default setting where user tasks are assumed to arrive at the AP, possible wireless communication between users and the AP can be captured within our framework by introducing an additional random delay (modeled via its Laplace transform) into the holding window. Alternatively, the wireless transmission stage can also be viewed as part of the information ``waiting to be consumed'' period before the AP makes its forwarding decision.
    Further exploration of such extended modeling
    is left for future work.
    \item \textit{Random delays and Laplace transform.} 
    Every link or waiting stage enters the formula only through its Laplace image. Swapping the exponential link for a log-normal, or Weibull therefore means replacing one transform in $S(H)$ while other results remain intact.
\end{itemize}

Because the entire framework requires only a single probability and Laplace factors to characterize each new feature, it can be easily adapted to different server disciplines, link variations, or traffic patterns, without substantial effort.

\section{Conclusion}
We have introduced a unified analytical framework to characterize and bound the task–success probability in status‐driven CFN systems. By reducing the AP’s forwarding rule to a single forwarding probability and capturing all network and waiting delays through their Laplace transforms, we derive a closed‐form expression for the end‐to‐end success probability, together with upper and lower bounds. Our theoretical analysis rigorously accounts for stochastic task arrivals, multi‐thread server blocking, information staleness, and bidirectional link delays. Extensive simulations
demonstrate that the closed‐form expression and bounds enclose the empirical success rates within a few percent across diverse operating parameters. Moreover, our results reveal that, while increasing status‐update frequency yields diminishing returns, enhancing server processing capacity delivers substantial gains in success probability. 
Owing to its reliance on just two interchangeable inputs, our methodology can be readily extended to other dynamics.
We believe these insights will guide the design and optimization of status‐driven offloading strategies in next‐generation edge and CFN deployments. 
Future work focus on multi-server, multi-packet on the link, and other stochastic process related topics.
\clearpage
\balance
\bibliographystyle{plain}
\bibliography{references}
\end{document}